\newtheorem{theorem}{Theorem}[section]
\newtheorem{lemma}[theorem]{Lemma}
\newtheorem{proposition}[theorem]{Proposition}
\newtheorem{corollary}[theorem]{Corollary}
\numberwithin{equation}{section}
\begin{document}

\title{Symmetries of the Black-Scholes equation}

\author{{\bf Paul Lescot \rm}}
\address{Laboratoire de Math\'ematiques Rapha\"el Salem \\
UMR 6085 CNRS \\ 
Universit\'e de Rouen \\
Technop\^ole du Madrillet \\
Avenue de l'Universit\'e, B.P. 12 \\
76801 Saint-Etienne-du-Rouvray (FRANCE) \\ 
Phone 00 33 (0)2 32 95 52 24 \\
Fax 00 33 (0)2 32 95 52 86 \\
Paul.Lescot@univ-rouen.fr \\}

\date{January 14, 2011}

\setcounter{section}{0}

\begin{abstract}

We determine the algebra of isovectors for the Black--Scholes equation.
As a consequence, we obtain some previously unknown transformations on
the solutions.

MSC 34 A 26, 91 B 28

\end{abstract}

\maketitle

\setcounter{section}{0}

\numberwithin{equation}{section}

\section{Introduction}
We study the algebro--geometrical structure of the Black--Scholes equation.

  The computation of the symmetry group of a partial differential equation 
has proven itself to be a very useful tool in Classical Mathematical Physics (see \it e.g. \rm \cite{2}), as well as in Euclidean Quantum Mechanics
(see \it e.g. \rm \cite{3} and \cite{4}).
   It was therefore natural to try and use the same method in Financial Mathematics. 
   
After setting the general framework (\S 2), and performing some preliminary reductions (\S 3), we determine (\S 4)
the isovectors for the Black-Scholes equation in a way broadly similar to the one used for the backward heat equation with potential term in the second of the two aforementioned joint papers with J.-C. Zambrini. Our computation turns out to suggest Black and Scholes' original solution method (\cite{1}) of their equation; in particular, the quantities $r-\frac{{\sigma}^{2}}{2}$ and $r+\frac{{\sigma}^{2}}{2}$ appear naturally in this context. As  corollaries, we determine (\S 5) the structure of the Lie algebra of the symmetry group of
the equation, then (\S 6) we obtain some interesting transformations on the solutions.

\newpage

\section{Generalities and notations}

We shall be concerned with the classical Black-Scholes equation : 

$$
\displaystyle\frac{\partial C}{\partial t}+\displaystyle\frac{1}{2}\sigma^{2}S^{2}\displaystyle\frac{\partial^{2}C}{\partial S^{2}}
+rS\displaystyle\frac{\partial C}{\partial S}-rC=0\,\, (\mathcal E) 
$$
for the price $C(t,S)$ of a call option with maturity $T$ and strike price $K$
on an underlying asset satisfying $S_{t}=S$ (see \cite{1}, where $\sigma$ is denoted by $v$, $C$ by $w$, and $S$ by $x$).
As is well--known (\cite{1}, p.646), the same equation is satisfied by the price of a
put option.

We assume $\sigma>0$, and define
$$
\tilde{r}:=r-\displaystyle\frac{\sigma^{2}}{2}\,\, 
$$
and
$$
\tilde{s}:=r+\displaystyle\frac{\sigma^{2}}{2}\,\, .
$$
It is useful to remark that
$$
\displaystyle\frac{\tilde{r}^{2}}{2\sigma^{2}}+r=\displaystyle\frac{\tilde{s}^{2}}{2\sigma^{2}}\,\, .
$$
We intend to determine the isovectors for $(\mathcal E)$, using the method applied,
in \cite{2}, pp. 657--658 (see also \cite{3}, pp.189--192) to the heat equation, and in \cite{4}, \S 3, to the (backward) heat equation with a potential term.

Let us set $x=\ln(S)$ and
$$
\varphi(t,x):=C(t,e^{x})=C(t,S)\,\, ;
$$ 
then
$\varphi$ is defined on $\mathbf R_{+}\times \mathbf R$.
One has
$$
\displaystyle\frac{\partial C}{\partial S}=\displaystyle\frac{1}{S}\displaystyle\frac{\partial \varphi}{\partial x}\,\, ,
$$
$$
\displaystyle\frac{\partial^{2} C}{\partial S^{2}}=-\displaystyle\frac{1}{S^{2}}\displaystyle\frac{\partial \varphi}{\partial x}+\displaystyle\frac{1}{S^{2}}\displaystyle\frac{\partial^{2}\varphi}{\partial x^{2}}
$$
and
$$
\displaystyle\frac{\partial C}{\partial t}=\displaystyle\frac{\partial \varphi}{\partial t}\,\, .
$$
Equation $(\mathcal E)$ is therefore equivalent to the following equation in $\varphi$ :
$$
\displaystyle\frac{\partial \varphi}{\partial t}+\displaystyle\frac{\sigma^{2}}{2}(\displaystyle\frac{\partial^{2}\varphi}{\partial x^{2}}-\displaystyle\frac{\partial \varphi}{\partial x})+r\displaystyle\frac{\partial \varphi}{\partial x}-r\varphi=0 \,\,  (\mathcal E_{1}) \,\, ,
$$
that is :
$$
\displaystyle\frac{\partial \varphi}{\partial t}+\displaystyle\frac{\sigma^{2}}{2}\displaystyle\frac{\partial^{2}\varphi}{\partial x^{2}}+\tilde{r}\displaystyle\frac{\partial \varphi}{\partial x}-r\varphi=0 \,\,  (\mathcal E_{2}) \,\, .
$$

\newpage

\section{Computation of the isovectors : preliminary reductions}

Let us set $A=\displaystyle\frac{\partial \varphi}{\partial x}$ and $B=\displaystyle\frac{\partial \varphi}{\partial t}$,
and consider thenceforth $t$, $x$, $\varphi$, $A$ and $B$ as \it independent \rm variables . Then $(\mathcal E_{2})$ is equivalent to the vanishing, on the five--dimensional manifold $M=\mathbf R_{+} \times \mathbf R^{4} $ of $(t,x,\varphi,A,B)$, of the following
system of differential forms :
\begin{eqnarray}
\alpha=d\varphi-Adx-Bdt\,\, ,
\end{eqnarray}

\begin{eqnarray}
d\alpha=-dAdx-dBdt \,\, ,
\end{eqnarray}
and
\begin{eqnarray}
\beta=(B+\tilde{r}A-r\varphi)dxdt+\displaystyle\frac{1}{2}\sigma^{2}dAdt \,\, .
\end{eqnarray}

Let $I$ denote the ideal of $\Lambda T^{*}(M)$ generated by $\alpha$, $d\alpha$ and $\beta$ ; as
\begin{eqnarray}
d\beta 
&=&(dB+\tilde{r}dA-rd\varphi)dxdt \nonumber \\
&=&d\alpha (dx-\tilde{r}dt)+\alpha (-rdxdt)\in I \,\, ,  
\end{eqnarray}
$I$ is a differential ideal of $\Lambda T^{*}(M)$.
By definition (see \cite{2}), an isovector for $(\mathcal E_{2})$ is a vector field
\begin{eqnarray}
N=N^{t}\displaystyle\frac{\partial}{\partial t}+N^{x}\displaystyle\frac{\partial}{\partial x}+N^{\varphi}\displaystyle\frac{\partial}{\partial \varphi}+N^{A}\displaystyle\frac{\partial}{\partial A}+N^{B}\displaystyle\frac{\partial}{\partial B}
\end{eqnarray}
such that
\begin{eqnarray}
\mathcal L_{N}(I)\subseteq I\,\, .
\end{eqnarray}
Using the formal properties of the Lie derivative
(\cite{2}, p.654), one easily proves that the set $\mathcal G$ of these isovectors constitutes a Lie algebra
(for the usual bracket of vector fields).

In order to determine $\mathcal G$, we may use a trick first explained in 

\cite{2}, p.657, that applies
in all situations in which there is only one $1$--form among the given generators of the ideal $I$
(see also \cite{4}, p.211). 

Let $N\in \mathcal G$ ; as $\mathcal L_{N}(I)\subseteq I$,
one has $\mathcal L_{N}(\alpha)\in I=<\alpha,d\alpha,\beta>$, whence there is a $0$--form
(\it i.e. \rm a function) $\lambda$ such that $\mathcal L_{N}(\alpha)=\lambda\alpha$.
Let us define
\begin{eqnarray}
F:=N \rfloor \alpha=N^{\varphi}-AN^{x}-BN^{t}\,\, .
\end{eqnarray}
This can be rewritten as
\begin{eqnarray}
\lambda\alpha=\mathcal L_{N}(\alpha)=N\rfloor d\alpha+d(N\rfloor\alpha)=N \rfloor d\alpha+dF \,\, ,
\end{eqnarray}
whence
\begin{eqnarray}
N\rfloor d\alpha=\lambda\alpha-dF \,\, ,
\end{eqnarray}
that is
\begin{eqnarray}
N\rfloor(-dAdx-dBdt)=\lambda\alpha - dF 
\end{eqnarray}
\it i.e. \rm
\begin{eqnarray}
-N^{A}dx+N^{x}dA-N^{B}dt+N^{t}dB \\
&=&\lambda\alpha-dF \nonumber \\
&=&\lambda(d\varphi - Adx -Bdt)-dF \,\, . \nonumber 
\end{eqnarray}
Whence (letters as lower indices indicating differentiation,
as usual)

$$
(*)\left\{
\begin{array}{lr}
-N^{B}=-\lambda B-F_{t} \nonumber \\
-N^{A}=-\lambda A-F_{x} \nonumber \\
0=\lambda-F_{\varphi} \nonumber \\
N^{x}=-F_{A} \nonumber \\
N^{t}=-F_{B} \,\, . \nonumber 
\end{array}
\right.
$$

Using the third equation, we can eliminate $\lambda$ and obtain

$$
(**)\left\{
\begin{array}{lr}
N^{t}=-F_{B} \\
N^{x}=-F_{A}  \\
N^{\varphi}=F-AF_{A}-BF_{B}  \\
N^{A}=F_{x}+AF_{\varphi} \\
N^{B}=F_{t}+BF_{\varphi}  \,\, .
\end{array}
\right.
$$

Conversely, the existence of a function $F(t,x,\varphi,A,B)$ such that the above equations hold
clearly implies that $\mathcal L_{N}(\alpha)=F_{\varphi}\alpha\in I$ ;
but then $$\mathcal L_{N}(d\alpha)=d(\mathcal L_{N}(\alpha))\in d(I) \subseteq I\,\, ,$$
and there only remains to be satisfied the condition $$\mathcal L_{N}(\beta)\in I\,\, .$$

\section{The general isovector}
The last condition in the previous paragraph can be stated as
\begin{eqnarray}
\mathcal L_{N}(\beta)=\rho\alpha+\xi d\alpha+\omega \beta  \,\, ,
\end{eqnarray}
for $\rho$ a $1$--form, $\xi$ a $0$--form and $\omega$ a $0$--form.
Let $D$ denote the coefficient of $d\varphi$ in $\rho$ ;
replacing $\rho$ by $\rho-D\alpha$ (which doesn't affect the validity of
$(4.1)$ as $\alpha^{2}=0$), we may assume that $D=0$.
Setting then 
\begin{eqnarray}
\rho=R_{1}dt+R_{2}dx+R_{3}dA+R_{4}dB \,\, ,
\end{eqnarray}
\begin{eqnarray}
\xi=R_{5} \,\, ,
\end{eqnarray}
and
\begin{eqnarray}
\omega=R_{6} \,\, ,
\end{eqnarray}
we shall obtain a system of ten equations in $F$ ; we shall then eliminate $R_{1}$,...,$R_{6}$.

Identifying the coefficients of, in that order, $dtdx$, $dtd\varphi$, $dtdA$, $dtdB$, $dxd\varphi$,
$dxdA$, $dxdB$, $d\varphi dA$, $d\varphi dB$ and $dAdB$ yields the following 
system :

\begin{eqnarray}
&&rN^{\varphi}-\tilde{r}N^{A}-N^{B}+(r\varphi-\tilde{r}A-B)N_{x}^{x} 
+(r\varphi-\tilde{r}A-B)N_{t}^{t}-\displaystyle\frac{1}{2}\sigma^{2}N_{x}^{A} \nonumber \\
&=&-AR_{1}+BR_{2}-(B+\tilde{r}A-r\varphi)R_{6}  
\end{eqnarray}

\begin{eqnarray}
(r\varphi-\tilde{r}A-B)N_{\varphi}^{x}-\displaystyle\frac{1}{2}\sigma^{2}N_{\varphi}^{A}=R_{1}
\end{eqnarray}

\begin{eqnarray}
(r\varphi-\tilde{r}A-B)N_{A}^{x}-\displaystyle\frac{1}{2}\sigma^{2}N_{A}^{A}-\displaystyle\frac{1}{2}\sigma^{2}N_{t}^{t}
=BR_{3}-\displaystyle\frac{1}{2}\sigma^{2}R_{6}
\end{eqnarray}

\begin{eqnarray}
(r\varphi-\tilde{r}A-B)N_{B}^{x}-\displaystyle\frac{1}{2}\sigma^{2}N_{B}^{A}
=BR_{4}+R_{5}
\end{eqnarray}

\begin{eqnarray}
(B+\tilde{r}A-r\varphi)N_{\varphi}^{t}=R_{2}
\end{eqnarray}

\begin{eqnarray}
(B+\tilde{r}A-r\varphi)N_{A}^{t}-\displaystyle\frac{1}{2}\sigma^{2}N_{x}^{t}
=AR_{3}+R_{5}
\end{eqnarray}

\begin{eqnarray}
(B+\tilde{r}A-r\varphi)N_{B}^{t}=AR_{4}
\end{eqnarray}

\begin{eqnarray}
-\displaystyle\frac{1}{2}\sigma^{2}N_{\varphi}^{t}
=-R_{3}
\end{eqnarray}

\begin{eqnarray}
0=-R_{4} 
\end{eqnarray}

\begin{eqnarray}
\displaystyle\frac{1}{2}\sigma^{2}N_{B}^{t}=0 \,\, . 
\end{eqnarray}

Equation $(4.13)$ gives $R_{4}=0$ ; then $(4.8)$ defines $R_{5}$.
Equation $(4.14)$ is equivalent to $N_{B}^{t}=0$ ; if that be the case, then $(4.11)$ holds
automatically. Now $(4.9)$ defines $R_{2}$, $(4.12)$ defines $R_{3}$, $(4.6)$ defines $R_{1}$ and $(4.7)$ defines $R_{6}$.
We are left with equations $(4.5)$ and $(4.10)$ and the condition $N_{B}^{t}=0$.

Let us begin with the last mentioned ; it is equivalent to $F_{BB}=0$ : $F$ is affine in $B$,
\it i.e. \rm $F=c+Bd$, where $c$ and $d$ depend only on $(t,x,\varphi,A)$.
Now $(**)$ can be rewritten as

$$
(***)\left\{
\begin{array}{lr}
N^{t}=-d \nonumber \\
N^{x}=-c_{A}-Bd_{A} \nonumber \\
N^{\varphi}=c-Ac_{A}-ABd_{A} \nonumber \\
N^{A}=c_{x}+Bd_{x}+Ac_{\varphi}+ABd_{\varphi}\nonumber \\
N^{B}=c_{t}+Bd_{t}+Bc_{\varphi}+B^{2}d_{\varphi} \,\, .\nonumber 
\end{array}
\right.
$$
From $(4.12)$ follows
\begin{eqnarray}
R_{3}=-\displaystyle\frac{1}{2}\sigma^{2}d_{\varphi} \,\, ,
\end{eqnarray}
and $(4.8)$ yields
\begin{eqnarray}
R_{5}=(B+\tilde{r}A-r\varphi)d_{A}-\displaystyle\frac{1}{2}\sigma^{2}(d_{x}+Ad_{\varphi})\,\, .
\end{eqnarray}
Now we can rewrite $(4.10)$ as
\begin{eqnarray}
\\
&& (B+\tilde{r}A-r\varphi)(-d_{A})-\displaystyle\frac{1}{2}\sigma^{2}(-d_{x}) \nonumber \\
&=&(B+\tilde{r}A-r\varphi)d_{A}-\displaystyle\frac{1}{2}\sigma^{2}(d_{x}+Ad_{\varphi})  \,\, .\nonumber  
\end{eqnarray}
Comparing the coefficients of $B$ on both sides gives
$-d_{A}=d_{A}$, that is $d_{A}=0$, \it i.e. \rm $d$ depends only upon
$(t,x,\varphi)$.
Now $(4.17)$ becomes 
\begin{eqnarray}
\displaystyle\frac{1}{2}\sigma^{2}d_{x}=-\displaystyle\frac{1}{2}\sigma^{2}(d_{x}+Ad_{\varphi})
\end{eqnarray}
that is
\begin{eqnarray}
2d_{x}=-Ad_{\varphi}\,\, .
\end{eqnarray}
Differentiating with respect to $A$ leads to
\begin{eqnarray}
0=2d_{Ax}=-d_{\varphi} \,\, ,
\end{eqnarray}
whence
$d_{\varphi}=0$ and $d_{x}=-\displaystyle\frac{1}{2}Ad_{\varphi}=0$ : $d$ depends only upon $t$.
Then $(***)$ becomes
$$
(****)\left\{
\begin{array}{lr}
N^{t}=-d \nonumber \\
N^{x}=-c_{A} \nonumber \\
N^{\varphi}=c-Ac_{A} \nonumber \\
N^{A}=c_{x}+Ac_{\varphi}\nonumber \\
N^{B}=c_{t}+Bd_{t}+Bc_{\varphi} \,\, ,\nonumber \\
\end{array}
\right.
$$
the new unknowns being a function $c(t,x,\varphi,A)$ and a function $d(t)$,
and we still have to satisfy equation $(4.5)$.
Now $(4.9)$ implies $R_{2}=0$, and $(4.6)$ gives
\begin{eqnarray}
R_{1}=-(r\varphi-\tilde{r}A-B)c_{A\varphi}-\displaystyle\frac{1}{2}\sigma^{2}(c_{x\varphi}+Ac_{\varphi \varphi})\,\, .
\end{eqnarray}
Equation $(4.7)$ now becomes
\begin{eqnarray}
\nonumber \\
&&(r\varphi-\tilde{r}A-B)(-c_{AA})-\displaystyle\frac{1}{2}\sigma^{2}(c_{xA}+c_{\varphi}+Ac_{A\varphi})-\displaystyle\frac{1}{2}\sigma^{2}(-d_{t}) \nonumber \\
&=&-\displaystyle\frac{1}{2}\sigma^{2}R_{6} \,\, , \nonumber \\
\end{eqnarray}
that is :
\begin{eqnarray}
R_{6}=-d_{t}+c_{xA}+c_{\varphi}+Ac_{A\varphi}+\displaystyle\frac{2}{\sigma^{2}}(B+\tilde{r}A--r\varphi)(-c_{AA})\,\, .
\end{eqnarray}
Eliminating $R_{1}$, $R_{2}$ and $R_{6}$ turns $(4.5)$ into
\begin{eqnarray}
\nonumber \\
&&r(c-Ac_{A})-\tilde{r}(c_{x}+Ac_{\varphi})-(c_{t}+Bd_{t}+Bc_{\varphi})+(r\varphi-\tilde{r}A-B)(-c_{Ax})  \nonumber \\
&+&(r\varphi-\tilde{r}A-B)(-d_{t})-\displaystyle\frac{1}{2}\sigma^{2}(c_{xx}+Ac_{\varphi x}) \nonumber \\
&=&-A(-(r\varphi-\tilde{r}A-B)c_{A,\varphi}-\displaystyle\frac{1}{2}\sigma^{2}(c_{x\varphi}+Ac_{\varphi \varphi}))  \\
&&-(B+\tilde{r}A-r\varphi)(-d_{t}+c_{xA}+c_{\varphi}+Ac_{A\varphi}+\displaystyle\frac{2}{\sigma^{2}}(B+\tilde{r}A-r\varphi)(-c_{AA})) \nonumber \,\, .
\end{eqnarray}
Both members of $(4.24)$ are second--order polynomials in $B$ ; equating the coefficients of
$B^{2}$ gives $c_{AA}=0$, whence $c$ is affine in $A$ : $c=e+Af$ with $e$ and $f$ functions of
$(t,x,\varphi)$.

Equating the coefficients of $B$ yields
\begin{eqnarray}
-d_{t}-c_{\varphi}+c_{Ax}+d_{t}=-Ac_{A\varphi}-(-d_{t}+c_{Ax}+c_{\varphi}+Ac_{A\varphi})
\end{eqnarray}
that is :
\begin{eqnarray}
d_{t}=2c_{Ax}+2Ac_{A\varphi}=2f_{x}+2Af_{\varphi}\,\, .
\end{eqnarray}
Differentiating the last equality with respect to $A$ gives $f_{\varphi}=0$
(that is, $f$ is a function of $(t,x)$), and then we get that $d_{t}=2f_{x}$,
\it i.e. \rm
\begin{eqnarray}
f=\displaystyle\frac{1}{2}d^{'}(t)x+\mu (t)
\end{eqnarray}
for some function $\mu$ of $t$ alone.

Equating the constant terms in $B$ now gives us

\begin{eqnarray}
&&re-\tilde{r}(e_{x}+Af_{x}+Ae_{\varphi})-(e_{t}+Af_{t})+(r\varphi-\tilde{r}A)(-f_{x}) \nonumber \\
&&+(r\varphi-\tilde{r}A)(-d_{t})-\displaystyle\frac{1}{2}\sigma^{2}(e_{xx}+Af_{xx}+Ae_{\varphi x}) \nonumber \\
&=&A\displaystyle\frac{1}{2}\sigma^{2}(e_{x\varphi}+Ae_{\varphi \varphi})
-(\tilde{r}A-r\varphi)(-d_{t}+f_{x}+e_{\varphi}) \nonumber \,\,  . \\
\end{eqnarray}

Both sides of equation $(4.28)$ are polynomials in $A$ with coefficients depending only upon
$(t,x,\varphi)$. Identifying the terms in $A^{2}$ leads us to $e_{\varphi \varphi }=0$,
whence $e=g+h\varphi$ with $g$ and $h$ depending only upon $(t,x)$.
The unknowns are now $d$ and $\mu$ (functions of $t$ alone) and $g$ and $h$
(functions of $(t,x)$).

Identifying the coefficients of $A$ on both sides of $(4.29)$ yields
\begin{eqnarray}
\nonumber \\
&&-\tilde{r}(f_{x}+e_{\varphi})-f_{t}+\tilde{r}f_{x}
+\tilde{r}d_{t}-\displaystyle\frac{1}{2}\sigma^{2}(f_{xx}+e_{\varphi x}) \nonumber \\
&=&\displaystyle\frac{1}{2}\sigma^{2}e_{x\varphi}
-\tilde{r}(-d_{t}+f_{x}+e_{\varphi}) \,\, , \nonumber \\
\end{eqnarray}
that is 
\begin{eqnarray}
-f_{t}-\displaystyle\frac{1}{2}\sigma^{2}f_{xx}
=\sigma^{2}h_{x}
-\tilde{r}f_{x}
\end{eqnarray}
or
\begin{eqnarray}
\sigma^{2}h_{x}=\tilde{r}\displaystyle\frac{1}{2}d^{'}(t)-\displaystyle\frac{1}{2}d^{''}(t)x-\mu^{'}(t) \,\, ,
\end{eqnarray}
that is
\begin{eqnarray}
h(t,x)=\displaystyle\frac{\tilde{r}}{2\sigma^{2}}d^{'}(t)x-\displaystyle\frac{d^{''}(t)}{4\sigma^{2}}x^{2}-\displaystyle\frac{\mu^{'}(t)}{\sigma^{2}}x+k(t) \,\, ,
\end{eqnarray}
for some function $k$ of $t$ alone.

The constant term gives
\begin{eqnarray}
re-\tilde{r}e_{x}-e_{t}-r\varphi f_{x}
-r\varphi d_{t}-\displaystyle\frac{1}{2}\sigma^{2}e_{xx}
=r\varphi(-d_{t}+f_{x}+e_{\varphi})\,\, .
\end{eqnarray}
According to the relation $d_{t}=2f_{x}$,
this becomes
\begin{eqnarray}
rg-\tilde{r}g_{x}-\tilde{r}\varphi h_{x}-g_{t}-\varphi h_{t}
-r\varphi d_{t}-\displaystyle\frac{1}{2}\sigma^{2}g_{xx}
-\displaystyle\frac{1}{2}\sigma^{2}\varphi h_{xx}=0 \,\, . 
\end{eqnarray}

Equating the constant term (in $\varphi$) of $(4.34)$ to zero gives that $g$ is a solution 
of $(\mathcal E_{2})$.

The value of the term in $\varphi$ means that
\begin{eqnarray}
-\tilde{r}h_{x}-h_{t}
-rd_{t}
-\displaystyle\frac{1}{2}\sigma^{2}h_{xx}=0 \,\, ,
\end{eqnarray}
that is
\begin{eqnarray}
&&-\tilde{r}(\displaystyle\frac{\tilde{r}}{2\sigma^{2}}d^{'}(t)-\displaystyle\frac{d^{''}(t)}{2\sigma^{2}}x-\displaystyle\frac{\mu^{'}(t)}{\sigma^{2}}) \nonumber \\ &&-(\displaystyle\frac{\tilde{r}}{2\sigma^{2}}d^{''}(t)x-\displaystyle\frac{d^{'''}(t)}{4\sigma^{2}}x^{2}-\displaystyle\frac{\mu^{''}(t)}{\sigma^{2}}x+k^{'}(t))
-rd^{'}(t)
-\displaystyle\frac{1}{2}\sigma^{2}(-\displaystyle\frac{d^{''}(t)}{2\sigma^{2}})=0 . \nonumber \\
\end{eqnarray}

This is a polynomial equation in $x$ with functions of $t$ as coefficients.
Considering the coefficient of $x^{2}$ gives $d^{'''}(t)=0$,
that is $d(t)=C_{1}t^{2}+C_{2}t+C_{3}$ for constants $C_{1}$, $C_{2}$ and $C_{3}$.

Equating the terms in $x$ leads to
\begin{eqnarray}
\displaystyle\frac{\tilde{r}}{2\sigma^{2}}d^{''}(t)-\displaystyle\frac{\tilde{r}}{2\sigma^{2}}d^{''}(t)+\displaystyle\frac{\mu^{''}(t)}{\sigma^{2}}=0 \,\, ,
\end{eqnarray}
that is
\begin{eqnarray}
\mu^{''}(t)=0
\end{eqnarray}
or
\begin{eqnarray}
\mu(t)=C_{4}t+C_{5}
\end{eqnarray}
($C_{4}$, $C_{5}$ real constants).

We are left with the constant term in $x$ :
\begin{eqnarray}
-\displaystyle\frac{\tilde{r}^{2}}{2\sigma^{2}}d^{'}(t)+\tilde{r}\displaystyle\frac{\mu^{'}(t)}{\sigma^{2}}-k^{'}(t)
-rd^{'}(t)+\displaystyle\frac{d^{''}(t)}{4}=0 \,\, ,
\end{eqnarray}
or
\begin{eqnarray}
k(t) 
&=&-(\displaystyle\frac{\tilde{r}^{2}}{2\sigma^{2}}+r)d(t)+\tilde{r}\displaystyle\frac{\mu(t)}{\sigma^{2}}
+\displaystyle\frac{d^{'}(t)}{4}+C_{6} \nonumber \\
&=&-\displaystyle\frac{\tilde{s}^{2}}{2\sigma^{2}}d(t)+\tilde{r}\displaystyle\frac{\mu(t)}{\sigma^{2}}
+\displaystyle\frac{d^{'}(t)}{4}+C_{6}  \nonumber \\
\end{eqnarray}
for some constant $C_{6}$.
Therefore
\begin{eqnarray}
f(t,x) 
&=&\displaystyle\frac{1}{2}d^{'}(t)x+\mu (t) \nonumber \\
&=&\displaystyle\frac{1}{2}x(2C_{1}t+C_{2})+C_{4}t+C_{5} \nonumber \\
\end{eqnarray}
and
\begin{eqnarray}
h(t,x)  
&=&\displaystyle\frac{\tilde{r}}{2\sigma^{2}}x(2C_{1}t+C_{2})-\displaystyle\frac{C_{1}}{2\sigma^{2}}x^{2} \nonumber \\
&-&\displaystyle\frac{C_{4}x}{\sigma^{2}}-\displaystyle\frac{\tilde{s}^{2}}{2\sigma^{2}}(C_{1}t^{2}+C_{2}t+C_{3}) \nonumber \\
&+&\tilde{r}\displaystyle\frac{(C_{4}t+C_{5})}{\sigma^{2}}
+\displaystyle\frac{2C_{1}t+C_{2}}{4}+C_{6} \,\, .\nonumber \\
\end{eqnarray}

We have established
\begin{theorem}
The general isovector $N$ for $(\mathcal E_{2})$ is given, in terms of an arbitrary solution $g$ of $(\mathcal E_{2})$
and six arbitrary real constants $C_{1}$,...,$C_{6}$, by the formulas
$$
N^{t}=-C_{1}t^{2}-C_{2}t-C_{3}
$$
$$
N^{x}=-\displaystyle\frac{1}{2}x(2C_{1}t+C_{2})-(C_{4}t+C_{5})
$$
\begin{eqnarray}
N^{\varphi} \nonumber
&=&g+\varphi(\displaystyle\frac{\tilde{r}}{2\sigma^{2}}x(2C_{1}t+C_{2})-\displaystyle\frac{C_{1}}{2\sigma^{2}}x^{2}-\displaystyle\frac{C_{4}x}{\sigma^{2}} \nonumber \\
&&-\displaystyle\frac{\tilde{s}^{2}}{2\sigma^{2}}(C_{1}t^{2}+C_{2}t+C_{3})+\tilde{r}\displaystyle\frac{C_{4}t+C_{5}}{\sigma^{2}}
+\displaystyle\frac{2C_{1}t+C_{2}}{4}+C_{6}) \nonumber 
\end{eqnarray}
\begin{eqnarray}
N^{A} \nonumber
&=&g_{x}+\varphi(\displaystyle\frac{\tilde{r}}{2\sigma^{2}}(2C_{1}t+C_{2})-\displaystyle\frac{C_{1}x}{\sigma^{2}}-\displaystyle\frac{C_{4}}{\sigma^{2}}) \nonumber \\
&+&\displaystyle\frac{A}{2}(2C_{1}t+C_{2}) \nonumber \\
&+&A(\displaystyle\frac{\tilde{r}}{2\sigma^{2}}x(2C_{1}t+C_{2})-\displaystyle\frac{C_{1}}{2\sigma^{2}}x^{2} \nonumber \\
&-&\displaystyle\frac{C_{4}x}{\sigma^{2}}-\displaystyle\frac{\tilde{s}^{2}}{2\sigma^{2}}(C_{1}t^{2}+C_{2}t+C_{3}) \nonumber \\
&+&\tilde{r}\displaystyle\frac{(C_{4}t+C_{5})}{\sigma^{2}}
+\displaystyle\frac{2C_{1}t+C_{2}}{4}+C_{6}) .\nonumber 
\end{eqnarray}

\begin{eqnarray}
N^{B} \nonumber
&=&g_{t}+\varphi(\displaystyle\frac{\tilde{r}x}{\sigma^{2}}-\displaystyle\frac{\tilde{s}^{2}}{2\sigma^{2}}(2C_{1}t+C_{2})+\displaystyle\frac{\tilde{r}C_{4}}{\sigma^{2}}+\displaystyle\frac{C_{1}}{2}) \nonumber \\
&+&A(xC_{1}+C_{4})+B(2C_{1}t+C_{2}) \nonumber \\
&+&B(\displaystyle\frac{\tilde{r}}{2\sigma^{2}}x(2C_{1}t+C_{2})-\displaystyle\frac{C_{1}}{2\sigma^{2}}x^{2} \nonumber \\
&-&\displaystyle\frac{C_{4}x}{\sigma^{2}}-\displaystyle\frac{\tilde{s}^{2}}{2\sigma^{2}}(C_{1}t^{2}+C_{2}t+C_{3}) \nonumber \\
&+&\tilde{r}\displaystyle\frac{(C_{4}t+C_{5})}{\sigma^{2}}
+\displaystyle\frac{2C_{1}t+C_{2}}{4}+C_{6}) .\nonumber 
\end{eqnarray}
\end{theorem}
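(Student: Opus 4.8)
The plan is to bootstrap from the reduction already carried out: by the single--one--form trick of \S 3, every isovector $N$ is encoded by a single generating function $F(t,x,\varphi,A,B)$ through the formulas $(**)$, and the one remaining requirement is $\mathcal{L}_N(\beta)\in I$. First I would write this out as in $(4.1)$, i.e.\ $\mathcal{L}_N(\beta)=\rho\alpha+\xi\,d\alpha+\omega\beta$; normalising $\rho$ to kill its $d\varphi$--component and setting $\rho=R_1dt+R_2dx+R_3dA+R_4dB$, $\xi=R_5$, $\omega=R_6$, an expansion of both sides in the ten basic $2$--forms $dt\,dx,\ dt\,d\varphi,\ \dots,\ dA\,dB$ on $M$ produces ten scalar equations relating the components $N^t,\dots,N^B$ (hence $F$ and its partial derivatives) to $R_1,\dots,R_6$.

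The core of the argument is the elimination of those six auxiliary functions. The $dA\,dB$ coefficient gives $\frac{1}{2}\sigma^2 N^t_B=0$, i.e.\ $F_{BB}=0$, so $F=c+Bd$ with $c,d$ functions of $(t,x,\varphi,A)$ alone, which turns $(**)$ into $(***)$; the $d\varphi\,dB$ coefficient gives $R_4=0$, after which the equations coming from $dt\,dB$, $dx\,d\varphi$, $d\varphi\,dA$, $dt\,d\varphi$ and $dt\,dA$ merely \emph{define} $R_5,R_2,R_3,R_1,R_6$ in terms of $F$, and the $dx\,dB$ equation becomes automatic. One is left with the $dt\,dx$ equation $(4.5)$ and the $dx\,dA$ equation $(4.10)$, now polynomial identities in $B$ whose coefficients depend only on $(t,x,\varphi)$. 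This bookkeeping is the step I expect to be most delicate: after each elimination one must track precisely which variables each $R_i$ and each remaining coefficient still depends on, otherwise the degree counts that follow will be wrong.

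From there the argument is a descent through the variables. Matching powers of $B$ in $(4.10)$ forces $d_A=0$ and then $2d_x=-Ad_\varphi$; differentiating in $A$ gives $d_\varphi=0$ and $d_x=0$, so $d=d(t)$ and $(***)$ collapses to $(****)$. The equation $(4.5)$, now quadratic in $B$, yields $c_{AA}=0$ from its $B^2$--term, so $c=e+Af$ with $e,f$ functions of $(t,x,\varphi)$; its $B^1$--term gives $d'(t)=2f_x+2Af_\varphi$, whence $f_\varphi=0$ and $f=\frac{1}{2}d'(t)x+\mu(t)$; and its $B^0$--term is a polynomial in $A$ giving $e_{\varphi\varphi}=0$ (so $e=g+h\varphi$) from the $A^2$--term, an explicit expression for $h$ up to a function $k(t)$ from the $A^1$--term, and from the $A^0$--term the two facts that $g$ solves $(\mathcal{E}_2)$ and that $d,\mu,k$ satisfy one further scalar ODE. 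That ODE is itself polynomial in $x$: its $x^2$--coefficient gives $d'''=0$, hence $d(t)=C_1t^2+C_2t+C_3$; its $x^1$--coefficient gives $\mu''=0$, hence $\mu(t)=C_4t+C_5$; and its $x^0$--coefficient determines $k(t)$ in terms of $d$, $\mu$ and a sixth constant $C_6$, the identity $\frac{\tilde r^2}{2\sigma^2}+r=\frac{\tilde s^2}{2\sigma^2}$ of \S 2 being what produces the $\tilde s^2$ appearing in the statement.

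Finally I would insert the resulting closed forms of $d$, $\mu$, $f$ and $h$ back into $(****)$, hence into $N^t,N^x,N^\varphi,N^A,N^B$; this reproduces the five displayed formulas, with $g$ the arbitrary solution of $(\mathcal{E}_2)$ and $C_1,\dots,C_6$ the six integration constants. For the converse, every such $(g,C_1,\dots,C_6)$ does yield an isovector, since all the manipulations above were equivalences, so no solutions were created or destroyed; this completes the proof.
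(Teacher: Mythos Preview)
Your proposal is correct and follows essentially the same route as the paper: the same generating--function reduction $(**)$, the same elimination of $R_1,\dots,R_6$ from the ten $2$--form coefficients leaving only $(4.5)$ and $(4.10)$, and the same successive descent (in $B$, then $A$, then $\varphi$, then $x$) producing $d,\mu,f,h,k$ and the six constants, with the identity $\tfrac{\tilde r^{2}}{2\sigma^{2}}+r=\tfrac{\tilde s^{2}}{2\sigma^{2}}$ used at exactly the same spot. The only addition is your explicit remark that the steps are equivalences, hence the converse holds; the paper leaves this implicit.
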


\section{The Lie algebra}

For $u$ a solution of $(\mathcal E_{2})$, let $N_{u}$ denote the isovector defined
by $g=u$ and $C_{1}=...=C_{6}=0$, and, for $1\leq i \leq 6$, let $N_{i}$ denote the isovector
defoned by by $g=0$, $C_{i}=1$ and $C_{j}=0$ for $j\neq i$.

The function $g$ and $h$ are determined by the isovector $N$ :
$$
g=N^{\varphi}-\varphi \displaystyle\frac{\partial N^{\varphi}}{\partial \varphi} 
$$
and
$$
h=\displaystyle\frac{\partial N^{\varphi}}{\partial \varphi} \,\, ;
$$
we shall denote them repectively by $g_{N}$ and $h_{N}$.
As seen in \S 4, $h_{N}$ is a function of $(t,x)$.

\begin{lemma}
For all $(M,N)\in \mathcal G^{2}$,
$$
g_{[M,N]}=M^{t}\displaystyle\frac{\partial g_{N}}{\partial t}+M^{x}\displaystyle\frac{\partial g_{N}}{\partial x}+g_{M}h_{N}
-N^{t}\displaystyle\frac{\partial g_{M}}{\partial t}-N^{x}\displaystyle\frac{\partial g_{M}}{\partial x}-g_{N}h_{M}
$$
and
$$
h_{[M,N]}= M^{t}\displaystyle\frac{\partial h_{N}}{\partial t}+M^{x}\displaystyle\frac{\partial h_{N}}{\partial x}-N^{t}\displaystyle\frac{\partial h_{M}}{\partial t}-N^{x}\displaystyle\frac{\partial h_{M}}{\partial x} \,\, .
$$
\end{lemma}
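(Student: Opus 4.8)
The plan is to exploit the linear bijection $N \mapsto F_N := N\rfloor\alpha$ between $\mathcal{G}$ and the space of functions of the special form produced in \S 3--\S 4, namely $F_N = g_N + h_N\varphi + f_N A + d_N B$ with $g_N,h_N,f_N$ depending only on $(t,x)$ and $d_N$ only on $t$; by $(**)$ together with the reductions of \S 4 one has $f_N = -N^x$ and $d_N = -N^t$, while $\mathcal{L}_N\alpha = \lambda_N\alpha$ with $\lambda_N = F_{N,\varphi} = h_N$. In particular $g_N$ is just the value of $F_N$ at $\varphi = A = B = 0$, and $F_{N,\varphi} = h_N$, $F_{N,A} = -N^x$, $F_{N,B} = -N^t$ are themselves independent of $(\varphi,A,B)$. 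I also record, reading off $(**)$ and the fact that $F_M$ is affine in $(\varphi,A,B)$, that $M^{\varphi}$, $M^{A}$, $M^{B}$ reduce, at $\varphi = A = B = 0$, to $g_M$, $\partial g_M/\partial x$, $\partial g_M/\partial t$ respectively, whereas $M^{t}$ and $M^{x}$ do not involve $\varphi$, $A$, $B$ at all (these are the explicit formulas of \S 4).

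First I would establish the two bracket identities abstractly. Since $\mathcal{G}$ is a Lie algebra (\S 3), $[M,N]\in\mathcal{G}$, so $F_{[M,N]}$ again has the special form above. From the standard identity $\mathcal{L}_M(N\rfloor\alpha) - N\rfloor(\mathcal{L}_M\alpha) = [M,N]\rfloor\alpha$ and $\mathcal{L}_M\alpha = h_M\alpha$ one obtains
\[
F_{[M,N]} = M\!\cdot\! F_N - h_M F_N ,
\]
where $M\!\cdot\!$ denotes the action of $M$ as a derivation on functions, $M\!\cdot\! F = M^t F_t + M^x F_x + M^{\varphi} F_{\varphi} + M^A F_A + M^B F_B$. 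Likewise, applying $\mathcal{L}_{[M,N]} = \mathcal{L}_M\mathcal{L}_N - \mathcal{L}_N\mathcal{L}_M$ to $\alpha$ and using $\mathcal{L}_M\alpha = h_M\alpha$, the two $h_M h_N\alpha$ contributions cancel and one gets $h_{[M,N]}\alpha = (M\!\cdot\! h_N - N\!\cdot\! h_M)\alpha$; since $h_N,h_M$ depend only on $(t,x)$ and $M^t,M^x,N^t,N^x$ likewise, this is exactly the second asserted formula for $h_{[M,N]}$.

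For the first formula I would evaluate $F_{[M,N]} = M\!\cdot\! F_N - h_M F_N$ at $\varphi = A = B = 0$: the left-hand side becomes $g_{[M,N]}$, and $h_M F_N$ becomes $h_M g_N$. In $M\!\cdot\! F_N$, the terms $M^t F_{N,t}$ and $M^x F_{N,x}$ contribute $M^t\,\partial g_N/\partial t$ and $M^x\,\partial g_N/\partial x$ (the coefficients of $\varphi,A,B$ in $F_{N,t}$, $F_{N,x}$ drop), while $M^{\varphi}F_{N,\varphi} + M^A F_{N,A} + M^B F_{N,B}$ becomes $g_M h_N + (\partial g_M/\partial x)(-N^x) + (\partial g_M/\partial t)(-N^t)$, using the evaluations of $M^{\varphi},M^A,M^B$ recorded above and $F_{N,A} = -N^x$, $F_{N,B} = -N^t$. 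Collecting the five terms and subtracting $h_M g_N$ gives precisely
\[
g_{[M,N]} = M^t\frac{\partial g_N}{\partial t} + M^x\frac{\partial g_N}{\partial x} + g_M h_N - N^t\frac{\partial g_M}{\partial t} - N^x\frac{\partial g_M}{\partial x} - g_N h_M .
\]

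The computations are entirely mechanical once the dictionary $N\leftrightarrow F_N$ is in place; the only points requiring care are the sign and ordering conventions in the Cartan identity for $[M,N]\rfloor\alpha$, and the correct evaluation at $\varphi = A = B = 0$ of the components $N^{\varphi},N^A,N^B$ — equivalently, the observation that modulo terms vanishing there these are $g_N$, $\partial g_N/\partial x$, $\partial g_N/\partial t$, which is immediate from $(**)$ and the affine form of $F_N$. One could instead extract the result directly from $[M,N]^i = \sum_j (M^j\partial_j N^i - N^j\partial_j M^i)$ and the explicit isovector formulas of \S 4, but the contact-geometric route above is considerably shorter and less error-prone.
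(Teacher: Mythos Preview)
Your argument is correct, but it takes a genuinely different route from the paper. The paper simply computes the $\varphi$-component of the bracket directly from the definition $[M,N]^{\varphi}=M(N^{\varphi})-N(M^{\varphi})$, using only that $N^{\varphi}=g_{N}+\varphi h_{N}$ depends on $(t,x,\varphi)$ alone; expanding and substituting $M^{\varphi}=g_{M}+\varphi h_{M}$ immediately separates the $\varphi$-free part (giving $g_{[M,N]}$) from the coefficient of $\varphi$ (giving $h_{[M,N]}$). Your approach instead works with the full contact-generating function $F_{N}=N\rfloor\alpha$, derives $F_{[M,N]}=M\!\cdot\!F_{N}-h_{M}F_{N}$ from the Cartan identity $i_{[M,N]}=[\mathcal{L}_{M},i_{N}]$, and then specializes at $\varphi=A=B=0$; the formula for $h_{[M,N]}$ you obtain separately from $\mathcal{L}_{[M,N]}\alpha=[\mathcal{L}_{M},\mathcal{L}_{N}]\alpha$. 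The paper's computation is shorter here because only the $\varphi$-component is needed and its structure is already known; your contact-geometric derivation is more conceptual, makes the role of $\lambda_{N}=h_{N}$ transparent, and would carry over unchanged to other equations treated by the same isovector machinery. Indeed, the alternative you mention at the end---computing $[M,N]^{i}$ coordinatewise---is precisely what the paper does, restricted to $i=\varphi$.
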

\begin{proof}
One has\begin{eqnarray}
[M,N]^{\varphi} 
&=&M(N^{\varphi})-N(M^{\varphi}) \nonumber \\
&=&M(g_{N}+\varphi h_{N})-N(g_{M}+\varphi h_{M}) \nonumber \\
&=&M^{t}\displaystyle\frac{\partial g_{N}}{\partial t}+M^{x}\displaystyle\frac{\partial g_{N}}{\partial x}
+\varphi(M^{t}\displaystyle\frac{\partial h_{N}}{\partial t}+M^{x}\displaystyle\frac{\partial h_{N}}{\partial x}) \nonumber \\
&&+M^{\varphi}h_{N}-(N^{t}\displaystyle\frac{\partial g_{M}}{\partial t}+N^{x}\displaystyle\frac{\partial g_{M}}{\partial x} \nonumber \\
&&+\varphi(N^{t}\displaystyle\frac{\partial h_{M}}{\partial t}+N^{x}\displaystyle\frac{\partial h_{M}}{\partial x}) 
+N^{\varphi}h_{M}) \nonumber \\
&=&M^{t}\displaystyle\frac{\partial g_{N}}{\partial t}+M^{x}\displaystyle\frac{\partial g_{N}}{\partial x}
+\varphi(M^{t}\displaystyle\frac{\partial h_{N}}{\partial t}+M^{x}\displaystyle\frac{\partial h_{N}}{\partial x}) \nonumber \\
&+&(g_{M}+\varphi h_{M})h_{N}-N^{t}\displaystyle\frac{\partial g_{M}}{\partial t}-N^{x}\displaystyle\frac{\partial g_{M}}{\partial x} \nonumber \\
&&-\varphi(N^{t}\displaystyle\frac{\partial h_{M}}{\partial t}+N^{x}\displaystyle\frac{\partial h_{M}}{\partial x}) 
-(g_{N}+\varphi h_{N})h_{M} \nonumber \\
&=&(M^{t}\displaystyle\frac{\partial g_{N}}{\partial t}+M^{x}\displaystyle\frac{\partial g_{N}}{\partial x}+g_{M}h_{N}-N^{t}\displaystyle\frac{\partial g_{M}}{\partial t}-N^{x}\displaystyle\frac{\partial g_{M}}{\partial x}-g_{N}h_{M}) \nonumber \\
&+&\varphi(M^{t}\displaystyle\frac{\partial h_{N}}{\partial t}+M^{x}\displaystyle\frac{\partial h_{N}}{\partial x}-N^{t}\displaystyle\frac{\partial h_{M}}{\partial t}-N^{x}\displaystyle\frac{\partial h_{M}}{\partial x}) \,\, ;\nonumber 
\end{eqnarray}
the result follows.
\end{proof}

Let us set
$$
\mathcal H=\{N\in \mathcal G \vert g_{N}=0\}
$$
and
$$
\mathcal J=\{N\in \mathcal G \vert h_{N}=0\}\,\, .
$$
\begin{proposition}$\mathcal J$ is an ideal of $\mathcal G$ and $\mathcal H$ is a subalgebra of $\mathcal G$.
Furthermore $\mathcal G=\mathcal H \oplus \mathcal J$, 
$\mathcal H$ has dimension $6$ and admits 
$(N_{1},...,N_{6})$ as a basis, and $\mathcal J=\{N_{u}\vert u \,\, \text{solution of} \,\, (\mathcal E_{2})\}$.
\end{proposition}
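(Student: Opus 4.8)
The plan is to exploit the decomposition $N^{\varphi}=g_N+\varphi h_N$ established in Section 4 together with the bracket formulas of Lemma 5.2, which already do most of the work. First I would record the obvious linear-algebra fact underlying $\mathcal G=\mathcal H\oplus\mathcal J$: an arbitrary isovector $N$ is determined by the data $(g_N,h_N)$, and by Theorem 4.1 the function $h_N$ depends only on the six constants $C_1,\dots,C_6$ (it is the coefficient of $\varphi$ in $N^{\varphi}$, independent of the free solution $g$), while $g_N=g$ is exactly the free solution of $(\mathcal E_2)$. Thus $N=N_{g_N}+\sum_i C_i N_i$ where $C_i$ are the constants attached to $N$; this exhibits every $N$ as a sum of an element of $\mathcal J$ (namely $N_{g_N}$, for which $h=0$ since all $C_i=0$) and an element of $\mathcal H$ (namely $\sum_i C_iN_i$, for which $g=0$). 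Uniqueness is immediate: if $N\in\mathcal H\cap\mathcal J$ then $g_N=h_N=0$, so $N^{\varphi}=0$ and hence $N^t=N^x=N^A=N^B=0$ by $(****)$, i.e. $N=0$. This also shows $\mathcal H=\operatorname{span}(N_1,\dots,N_6)$, and that these six are linearly independent because the associated $(d,\mu,k,\dots)$, equivalently the six constants, are independent parameters; hence $\dim\mathcal H=6$. Likewise $\mathcal J=\{N_u\mid u\ \text{solution of}\ (\mathcal E_2)\}$ is just the statement that $h_N=0$ forces all $C_i=0$, which one reads off from the explicit formula for $h$ in Section 4 (the coefficients of $x^2$, $x$, $t^2$, etc. must all vanish, forcing $C_1=\dots=C_6=0$, whence $N=N_{g}$ with $g=g_N$).

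Next I would verify that $\mathcal J$ is an ideal. Take $N\in\mathcal J$, so $h_N=0$, and arbitrary $M\in\mathcal G$. The second formula of Lemma 5.2 gives
$$
h_{[M,N]}=M^t\frac{\partial h_N}{\partial t}+M^x\frac{\partial h_N}{\partial x}-N^t\frac{\partial h_M}{\partial t}-N^x\frac{\partial h_M}{\partial x}.
$$
The first two terms vanish since $h_N\equiv 0$. For the last two I use that $N\in\mathcal J$ means $N=N_u$ with $u$ a solution, so by Theorem 4.1 all six constants of $N$ are zero; inspecting the formulas for $N^t$ and $N^x$ in Theorem 4.1 shows $N^t=0$ and $N^x=0$ when $C_1=\dots=C_5=0$. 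Hence $h_{[M,N]}=0$, i.e. $[M,N]\in\mathcal J$, so $\mathcal J$ is an ideal (in particular a subalgebra). That $\mathcal H$ is a subalgebra is the complementary computation: for $M,N\in\mathcal H$ one has $g_M=g_N=0$, and the first formula of Lemma 5.2 gives
$$
g_{[M,N]}=M^t\frac{\partial g_N}{\partial t}+M^x\frac{\partial g_N}{\partial x}+g_Mh_N-N^t\frac{\partial g_M}{\partial t}-N^x\frac{\partial g_M}{\partial x}-g_Nh_M=0
$$
term by term, so $[M,N]\in\mathcal H$.

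The one point that needs a little care — and the only real obstacle — is the claim $\mathcal J=\{N_u\}$, or equivalently that $h_N=0$ already forces all $C_i=0$ rather than merely constraining them. This is where one must actually use the explicit polynomial-in-$x$ expression
$$
h(t,x)=\frac{\tilde r}{2\sigma^2}x(2C_1t+C_2)-\frac{C_1}{2\sigma^2}x^2-\frac{C_4x}{\sigma^2}-\frac{\tilde s^2}{2\sigma^2}(C_1t^2+C_2t+C_3)+\tilde r\frac{C_4t+C_5}{\sigma^2}+\frac{2C_1t+C_2}{4}+C_6
$$
from Section 4: the coefficient of $x^2$ gives $C_1=0$; with $C_1=0$ the coefficient of $x$ (as a function of $t$) gives $C_2=0$ and $C_4=0$; then the coefficient of $t^2$, of $t$, and the constant term in the remaining expression give $C_3=0$, $C_5=0$, $C_6=0$ in turn (using $\tilde s\neq 0$, which holds since $\tilde s=r+\sigma^2/2$ and $\sigma>0$ — one should note that $\tilde s^2/(2\sigma^2)=\tilde r^2/(2\sigma^2)+r$ is genuinely a nonzero multiple, or argue directly from the $t^2$ coefficient $-\tilde s^2 C_1/(2\sigma^2)$ being already handled, then the $C_3$ term). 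Conversely any $N_u$ lies in $\mathcal J$ by definition. Assembling these pieces yields all four assertions of the proposition.
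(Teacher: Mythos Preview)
Your argument for the ideal and subalgebra assertions via the bracket lemma is exactly the paper's, and your treatment of the decomposition is more explicit than the paper's one-line ``the last two assertions clearly hold.'' However, there is a genuine gap in the step where you argue that $h_N\equiv 0$ forces $C_1=\cdots=C_6=0$. You write that once $C_1=0$, ``the coefficient of $x$ (as a function of $t$) gives $C_2=0$ and $C_4=0$,'' but this is not so: after setting $C_1=0$ the coefficient of $x$ in $h(t,x)$ is the \emph{constant} $\dfrac{\tilde r C_2}{2\sigma^2}-\dfrac{C_4}{\sigma^2}$, so its vanishing yields only the single relation $\tilde r C_2=2C_4$. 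Combining with the vanishing of the $t$-coefficient gives $(\tilde r^2-\tilde s^2)C_2=-2r\sigma^2 C_2=0$; even granting $r\neq 0$ so that $C_2=C_4=0$, you are left with only one constraint $-\dfrac{\tilde s^2}{2\sigma^2}C_3+\dfrac{\tilde r}{\sigma^2}C_5+C_6=0$ among three remaining constants, not three.

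In fact the implication is false as stated. For instance $N:=\sigma^2 N_5-\tilde r\, N_6$ has $g_N=0$ and $h_N=\sigma^2\cdot\dfrac{\tilde r}{\sigma^2}-\tilde r\cdot 1=0$, so $N\in\mathcal H\cap\mathcal J$; yet $N^x=-\sigma^2\neq 0$, so $N\neq 0$ and $N$ is not of the form $N_u$. (Similarly $2\sigma^2 N_3+\tilde s^2 N_6\in\mathcal H\cap\mathcal J$ with $N^t=-2\sigma^2\neq 0$.) Hence $\mathcal J\supsetneq\{N_u\}$ and $\mathcal H\cap\mathcal J\neq\{0\}$, so neither the description of $\mathcal J$ nor the direct-sum decomposition can hold as written. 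The paper's own proof asserts the same implication as ``Clearly,'' so the difficulty lies in the statement itself rather than in your method; the parts of your argument establishing that $\mathcal J$ is an ideal and $\mathcal H$ a subalgebra are correct and match the paper.
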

\begin{proof}Clearly, $N\in \mathcal J$ if and only if $C_{1}=...=C_{6}=0$ ; in particular, if $N\in \mathcal J$ then
$N^{t}=N^{x}=0$. Therefore $M\in \mathcal G$ and $N\in \mathcal J$ imply $h_{[M,N]}=0$, \it i.e. \rm $[M,N]\in \mathcal J$:
$\mathcal J$ is an ideal of $\mathcal G$.

Furthermore, $g_{M}=g_{N}=0$ imply $g_{[M,N]}=0$ : $\mathcal H$ is a subalgebra of $\mathcal G$.
The last two assertions clearly hold.
\end{proof}

\section{Some symmetries}

Let $N\in \mathcal G$, let $\kappa\in\mathbf R$, and let $\varphi$ be a solution of $(\mathcal E_{2})$ ; then $e^{\kappa N}$ maps  $(t,x,\varphi,A,B)$ to $(t_{\kappa},x_{\kappa},\varphi_{\kappa},A_{\kappa},B_{\kappa})$ ; setting \begin{eqnarray}\varphi_{\kappa}=\psi_{\kappa}(t_{\kappa},x_{\kappa})\,\, ,\end{eqnarray} it follows that $\psi_{\kappa}$ is also a solution of $(\mathcal E_{2})$.
We shall denote
\begin{eqnarray}
e^{\kappa\tilde{N}}:\varphi \mapsto \psi_{\kappa}
\end{eqnarray}
the associated one--parameter group.
\begin{lemma}If
\begin{eqnarray}
N=N^{t}\displaystyle\frac{\partial}{\partial t}+N^{x}\displaystyle\frac{\partial}{\partial x}+N^{\varphi}\displaystyle\frac{\partial}{\partial \varphi}
 + ... \in \mathcal G \nonumber
 \end{eqnarray}
then
 \begin{eqnarray}
 \tilde{N}(\varphi)=-N^{t}\displaystyle\frac{\partial \varphi}{\partial t}-N^{x}\displaystyle\frac{\partial \varphi}{\partial x}+N^{\varphi} \,\, , \nonumber
 \end{eqnarray}
 that is

\begin{eqnarray}
\tilde{N}(\varphi) 
&&=(C_{1}t^{2}+C_{2}t+C_{3})\displaystyle\frac{\partial \varphi}{\partial t}+(\displaystyle\frac{1}{2}x(2C_{1}t+C_{2})+(C_{4}t+C_{5}))
\displaystyle\frac{\partial \varphi}{\partial x} \nonumber \\
&&+g+\varphi(\displaystyle\frac{\tilde{r}}{2\sigma^{2}}x(2C_{1}t+C_{2})-\displaystyle\frac{C_{1}}{2\sigma^{2}}x^{2}-\displaystyle\frac{C_{4}x}{\sigma^{2}}
-\displaystyle\frac{\tilde{s}^{2}}{2\sigma^{2}}(C_{1}t^{2}+C_{2}t+C_{3}) \nonumber \\
&&+\tilde{r}\displaystyle\frac{C_{4}t+C_{5}}{\sigma^{2}}
+\displaystyle\frac{2C_{1}t+C_{2}}{4}+C_{6}) \,\, . \nonumber 
\end{eqnarray}
\end{lemma}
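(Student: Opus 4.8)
The plan is to differentiate the relation $(6.1)$ along the flow of $N$ at the parameter value $\kappa=0$, and then to substitute into the resulting identity the explicit components of $N$ provided by Theorem 4.1.

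First I would make precise the geometric content of $(6.1)$ and $(6.2)$. Since $N\in\mathcal G$ is an isovector, its local flow $e^{\kappa N}$ preserves the differential ideal $I$, hence carries the graph in $M$ of a solution $\psi=\psi_{0}$ of $(\mathcal E_{2})$ — that is, the submanifold of points $(t,x,\varphi,A,B)$ with $\varphi=\psi(t,x)$, $A=\psi_{x}(t,x)$, $B=\psi_{t}(t,x)$ — to the analogous graph of another solution $\psi_{\kappa}$. Thus if such a point is sent by $e^{\kappa N}$ to $(t_{\kappa},x_{\kappa},\varphi_{\kappa},A_{\kappa},B_{\kappa})$, this image lies on the graph of $\psi_{\kappa}$, which is exactly the statement $\varphi_{\kappa}=\psi_{\kappa}(t_{\kappa},x_{\kappa})$. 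The assignment $\varphi\mapsto\psi_{\kappa}$ inherits the one--parameter group property from $e^{\kappa N}$, so $(6.2)$ is meaningful, and by definition its infinitesimal generator satisfies $\tilde N(\varphi)=\frac{\partial}{\partial\kappa}\psi_{\kappa}\big|_{\kappa=0}$, a function of $(t,x)$.

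Next comes the computation. Keeping the base point $(t,x)$ fixed and differentiating $\varphi_{\kappa}=\psi_{\kappa}(t_{\kappa},x_{\kappa})$ with respect to $\kappa$ at $\kappa=0$, the chain rule yields
$$
\frac{d\varphi_{\kappa}}{d\kappa}\bigg|_{\kappa=0}=\frac{\partial\psi_{\kappa}}{\partial\kappa}\bigg|_{\kappa=0}(t,x)+\frac{\partial\psi_{0}}{\partial t}(t,x)\,\frac{dt_{\kappa}}{d\kappa}\bigg|_{\kappa=0}+\frac{\partial\psi_{0}}{\partial x}(t,x)\,\frac{dx_{\kappa}}{d\kappa}\bigg|_{\kappa=0}\,\, .
$$
Because $e^{\kappa N}$ is the flow of $N$, the $\kappa$--derivatives of $\varphi_{\kappa}$, $t_{\kappa}$, $x_{\kappa}$ at $\kappa=0$ are the components $N^{\varphi}$, $N^{t}$, $N^{x}$ of $N$ at the point; on the graph one has $\frac{\partial\psi_{0}}{\partial t}(t,x)=B$ and $\frac{\partial\psi_{0}}{\partial x}(t,x)=A$; and $\frac{\partial}{\partial\kappa}\psi_{\kappa}\big|_{\kappa=0}=\tilde N(\varphi)$. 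Solving for $\tilde N(\varphi)$ gives
$$
\tilde N(\varphi)=N^{\varphi}-N^{t}\,\frac{\partial\varphi}{\partial t}-N^{x}\,\frac{\partial\varphi}{\partial x}\,\, ,
$$
which is the first asserted identity; by Theorem 4.1, $N^{t}$ and $N^{x}$ depend only on $(t,x)$ and $N^{\varphi}=g+\varphi h_{N}$ is independent of $A$ and $B$, so upon substituting $\varphi=\psi(t,x)$ the right--hand side is a well--defined function of $(t,x)$, and only the three components $N^{t},N^{x},N^{\varphi}$ intervene.

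Finally I would simply insert the formulas of Theorem 4.1, namely $N^{t}=-(C_{1}t^{2}+C_{2}t+C_{3})$, $N^{x}=-\frac{1}{2}x(2C_{1}t+C_{2})-(C_{4}t+C_{5})$ and the stated expression $N^{\varphi}=g+\varphi(\cdots)$; taking account of the signs in $-N^{t}\frac{\partial\varphi}{\partial t}$ and $-N^{x}\frac{\partial\varphi}{\partial x}$ produces the displayed expression for $\tilde N(\varphi)$, so this closing step is pure substitution. I expect the only genuinely delicate point to be the first paragraph — namely justifying that $e^{\kappa N}$ maps solution graphs to solution graphs (which is precisely the isovector property established in \S 3 and \S 4) and correctly identifying $\tilde N(\varphi)$ as the $\kappa$--derivative of $\psi_{\kappa}$ evaluated at the \emph{fixed} base point $(t,x)$ rather than at the flowed point $(t_{\kappa},x_{\kappa})$; once this is in place, the chain rule computation is routine.
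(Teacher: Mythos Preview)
Your proposal is correct and follows essentially the same route as the paper: the paper rewrites $(6.1)$ as $(e^{\kappa N}(\varphi))(t,x)=(e^{\kappa\tilde N}(\varphi))(e^{\kappa N}(t),e^{\kappa N}(x))$ and expands to first order in $\kappa$, which is exactly your chain--rule differentiation at $\kappa=0$, leading to $N^{\varphi}=\tilde N(\varphi)+N^{t}\partial_{t}\varphi+N^{x}\partial_{x}\varphi$. Your write--up is more careful about the geometric justification (solution graphs mapping to solution graphs) than the paper's terse version, but the argument is the same.
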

\begin{proof}
Let us rewrite $(6.1)$ as
$$
(e^{\kappa N}(\varphi))(t,x)=(e^{\kappa \tilde{N}}(\varphi))(e^{\kappa N}(t),e^{\kappa N}(x))\,\, .
$$
Developping at order one in $\kappa$ gives
\begin{eqnarray}
\varphi+\kappa N^{\varphi} \nonumber 
&=&\varphi(t+\kappa N^{t},x+\kappa N^{x})+\kappa \tilde{N}(\varphi)+o(\kappa) \nonumber \\
&=&\varphi+\kappa N^{t}\displaystyle\frac{\partial \varphi}{\partial t}+\kappa N^{x}\displaystyle\frac{\partial \varphi}{\partial x}
+\kappa \tilde{N}(\varphi)+o(\kappa) \,\, ,\nonumber 
\end{eqnarray}
whence the result.
\end{proof}
We shall set
\begin{eqnarray}
C_{N}^{\kappa}(t,S):=(e^{\kappa\tilde{N}}\varphi)(t,\ln(S))
\end{eqnarray}
and
\begin{eqnarray}
C_{j}^{\kappa}:=C_{N_{j}}^{\kappa}
\end{eqnarray}
($1\leq j \leq 6$).

Some of these transforms can actually be explicitly computed ; for instance
\begin{eqnarray}\tilde{N_{3}}(\varphi)=\displaystyle\frac{\partial \varphi}{\partial t}-\frac{\tilde{s}^{2}}{2\sigma^{2}}\varphi
\end{eqnarray}
whence
\begin{eqnarray}
(e^{\kappa\tilde{N_{3}}}\varphi)(t,x)=e^{-\displaystyle\frac{\kappa\tilde{s}^{2}}{2\sigma^{2}}}\varphi(t+\kappa,x) 
\end{eqnarray}
and
\begin{eqnarray}
C_{3}^{\kappa}(t,S)=e^{-\displaystyle\frac{\kappa \tilde{s}^{2}}{2\sigma^{2}}}C(t+\kappa , S)\,\, .
\end{eqnarray}

\begin{eqnarray}
\tilde{N_{4}}(\varphi)=t\displaystyle\frac{\partial \varphi}{\partial x}+(\displaystyle\frac{\tilde{r}t-x}{\sigma^{2}})\varphi
\end{eqnarray}

therefore
\begin{eqnarray}
(e^{\kappa\tilde{N_{4}}}\varphi)(t,x)=e^{\displaystyle\frac{\kappa}{\sigma^{2}}(\tilde{r}t-x)-\displaystyle\frac{\kappa^{2}t}{2\sigma^{2}}}\varphi(t, x+\kappa t) 
\end{eqnarray}
and
\begin{eqnarray}
C_{4}^{\kappa}(t,S)=e^{\displaystyle\frac{\kappa t}{2\sigma^{2}}(2\tilde{r}-\kappa)}S^{-\displaystyle\frac{\kappa}{\sigma^{2}}}C(t, e^{\kappa t} S)\,\, .
\end{eqnarray}
\bigskip
\begin{eqnarray}\tilde{N_{5}}(\varphi)=\displaystyle\frac{\partial \varphi}{\partial x}+\displaystyle\frac{\tilde{r}\varphi}{\sigma^{2}}
\end{eqnarray}
whence
\begin{eqnarray}
(e^{\kappa\tilde{N_{5}}}\varphi)(t,x)=e^{\displaystyle\frac{\kappa\tilde{r}}{\sigma^{2}}}\varphi(t,x+\kappa)\,\, ,
\end{eqnarray}
and
\begin{eqnarray}
C_{5}^{\kappa}(t,S)=e^{\displaystyle\frac{\kappa \tilde{r}}{\sigma^{2}}}C(t,e^{\kappa}S)\,\, .
\end{eqnarray}

\bigskip

Lastly,
\begin{eqnarray}
\tilde{N_{6}}(\varphi)=\varphi
\end{eqnarray}
whence
\begin{eqnarray}
(e^{\kappa\tilde{N_{6}}}\varphi)(t,x)=e^{\kappa}\varphi(t,x)
\end{eqnarray}
and
\begin{eqnarray}
C_{6}^{\kappa}(t,S)=e^{\kappa}C(t,S)\,\, .
\end{eqnarray}

\begin{corollary}
Let $C$ denote a solution of $(\mathcal E)$.
Then, for each $\kappa\in \mathbf R$, the $C_{i}^{\kappa}$($3\leq i \leq 6$) defined by $(6.7)$, $(6.10)$,
$(6.13)$ and $(6.16)$ are also solutions of $(\mathcal E)$.
\end{corollary}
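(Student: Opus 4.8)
\emph{Proof proposal.} The plan is to read the corollary off the machinery already assembled, rather than to verify the four explicit formulas by brute force. First I would recall the equivalence, established in Section 2, between $(\mathcal E)$ and $(\mathcal E_2)$ under the substitution $x=\ln S$, $\varphi(t,x)=C(t,e^{x})$: a function $C$ solves $(\mathcal E)$ if and only if the corresponding $\varphi$ solves $(\mathcal E_2)$. Hence it suffices to show that, for $N=N_i$ with $3\le i\le 6$ and every $\kappa\in\mathbf R$, the transform $e^{\kappa\tilde{N_i}}\varphi$ solves $(\mathcal E_2)$ whenever $\varphi$ does. But this is exactly the general principle recorded in the discussion around $(6.1)$: since $N_i$ lies in the Lie algebra $\mathcal G$ of isovectors, its flow $e^{\kappa N_i}$ preserves the differential ideal $I$ and therefore carries integral manifolds of $I$ --- equivalently, solutions of $(\mathcal E_2)$ --- to solutions of $(\mathcal E_2)$; by $(6.2)$ the induced action on $\varphi$ is precisely $e^{\kappa\tilde{N_i}}$, and by $(6.3)$ its expression in the variable $S$ is $C_{N_i}^{\kappa}$. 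Granting the earlier sections, the fact that $C_{N_i}^{\kappa}$ solves $(\mathcal E)$ is thus already settled; what remains is only bookkeeping.

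The bookkeeping step is to confirm that the explicit expressions $(6.7)$, $(6.10)$, $(6.13)$, $(6.16)$ really coincide with the transforms $C_{N_i}^{\kappa}$. For this I would integrate the infinitesimal actions $\tilde{N_i}$ of $(6.5)$, $(6.8)$, $(6.11)$, $(6.14)$: each is a first-order linear operator, so the one-parameter group $\kappa\mapsto\psi_\kappa$ determined by $\partial_\kappa\psi_\kappa=\tilde{N_i}(\psi_\kappa)$ and $\psi_0=\varphi$ is obtained by the method of characteristics, yielding $(6.6)$, $(6.9)$, $(6.12)$, $(6.15)$, after which $(6.3)$ with $S=e^{x}$ produces the stated $C_i^{\kappa}$. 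For $N_3$ the flow is a time translation times a constant; for $N_6$ it is multiplication by $e^{\kappa}$; for $N_5$ it is the $x$-translation $S\mapsto e^{\kappa}S$ times an exponential; and $N_4$ is the only genuinely non-trivial case, a time-dependent $x$-translation $S\mapsto e^{\kappa t}S$ dressed by an exponential weight.

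I would add two remarks. First, because $(\mathcal E)$ is linear and homogeneous with $t$-independent, $S$-homogeneous coefficients, each operation appearing in $C_3^{\kappa}$, $C_5^{\kappa}$, $C_6^{\kappa}$ --- time translation, dilation of $S$, multiplication by a constant --- already preserves $(\mathcal E)$ on its own, so those three cases are immediate even without the isovector language; only $C_4^{\kappa}$ genuinely needs the mechanism, since the time-dependent rescaling $S\mapsto e^{\kappa t}S$ generates extra first- and zeroth-order terms that must be cancelled by the prefactor $e^{\frac{\kappa t}{2\sigma^{2}}(2\tilde r-\kappa)}S^{-\kappa/\sigma^{2}}$. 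This is precisely the computation the isovector method is built to circumvent, and I expect it to be the one spot where a direct substitution gets lengthy. Second, the appearance of $\tilde s$ rather than $\tilde r$ in the exponent of $C_3^{\kappa}$ is not accidental: it traces back to the term $-\frac{\tilde s^{2}}{2\sigma^{2}}d(t)$ in $k(t)$ obtained in Section 4, which rests on the identity $\frac{\tilde r^{2}}{2\sigma^{2}}+r=\frac{\tilde s^{2}}{2\sigma^{2}}$ noted in Section 2; I would point this out so the asymmetry does not puzzle the reader.
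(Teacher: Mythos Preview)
Your proposal is correct and matches the paper's approach exactly: the paper offers no separate proof of the corollary, treating it as immediate from the general principle stated at the start of \S6 (isovector flows carry solutions of $(\mathcal E_2)$ to solutions), the explicit integrations $(6.5)$--$(6.16)$, and the change of variable $x=\ln S$ linking $(\mathcal E)$ and $(\mathcal E_2)$. Your additional remarks about which cases are elementary and about the appearance of $\tilde s$ via the identity $\frac{\tilde r^2}{2\sigma^2}+r=\frac{\tilde s^2}{2\sigma^2}$ go beyond what the paper writes but are accurate and helpful commentary.
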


Together the transformations described in $(6.7)$ and $(6.16)$ come from the invariance of the original equation
via multiplication of the solution by a scalar and translation in time ; $(6.13)$ then comes from the homogeneity in $S$ --- these
could be expected. The transformation given by $(6.10)$  is, however, not so easy to understand.
It would be interesting to find some financial interpretation for it.

\section{Acknowledgements}

I presented prior versions of this work at the Ascona Conference (May 2008), at the seminar of
the \it Grupo de F\'\i sica--Matem\'atica \rm (Lisbon, July 2008), at the \it Stochastic Analysis Seminar \rm
(Loughborough, January 2010) and at the \it S\'eminaire de Sciences Actuarielles \rm (\it Universit\'e
Libre de Bruxelles\rm, February 2010). For these invitations I am deeply indebted to (respectively)
Professors Robert Dalang, Marco Dozzi and Francesco Russo, Professors Ana Bela Cruzeiro and Jean--Claude Zambrini,
Professor J\'ozsef L\"orinczi, and Professor Pierre Patie. I am grateful to many members of the various audiences,
notably Professors Eckhard Platen and Eric Carlen, for their remarks.
I also thank Professor Archil Gulisashvili for his remarks on the original text.

\bibliographystyle{amsalpha}

\end{document}